\newcommand{\keywords}[1]{\par\addvspace\baselineskip
\noindent\keywordname\enspace\ignorespaces#1}
\begin{document}

\mainmatter  

\title{Job Edge-Fog Interconnection Network Creation Game in Internet of Things}

\titlerunning{Job Edge-Fog Interconnection Network Creation Game in Internet of Things}

%
%
\author{Rupei Xu%
\thanks{}%
\and Andr\'{a}s Farag\'{o} \and Jason P. Jue 
 }
\authorrunning{Rupei Xu \and Andr\'{a}s Farag\'{o} \and Jason P. Jue}

\institute{The University of Texas at Dallas\\800 W. Campbell Road\\
Richardson, TX 75080, USA\\
\mailsa\\
\mailsb\\
\mailsc\\
}

%
%

\toctitle{Job Edge-Fog Interconnection Network Creation Game in Internet of Things}
\tocauthor{Rupei Xu, Andr\'{a}s Farag\'{o} and  Jason P. Jue}
\maketitle

\begin{abstract}
	
This is the first paper to address the topology structure of Job Edge-Fog interconnection network in the perspective of network creation game. A two level network creation game model is given, in which the first level is similar to the traditional network creation game with total length objective to other nodes. The second level adopts two types of cost functions, one is created based on the Jackson-Wolinsky type of distance based utility, another is created based on the Network-Only Cost in the IoT literature. We show the performance of this two level game (Price of Anarchy). This work discloses how the selfish strategies of each individual device can influence the global topology structure of the job edge-fog interconnection network and provides theoretical foundations of the IoT infrastructure construction. A significant advantage of this framework is that it can avoid solving the traditional expensive and impractical quadratic assignment problem, which was the typical framework to study this task. Furthermore, it can control the systematic performance based only on one or two cost parameters of the job edge-fog networks, independently and in a distributed way.

\keywords{Network Creation Game, Edge-Fog Computation, Internet of Things}
\end{abstract}

\section{Introduction}

By the year 2020, major technology companies expect that the number of connected devices will be in the range of 25-50 billion. In particular, Cisco and Ericsson believe that 50 billion devices will be reached by 2020 \cite{ERIC} \cite{CISCO}. The Gartner Group on the other hand expects that number to be around 26 billion \cite{Gartner}. Internet of Things (IoT) typically involves a large number
of smart sensors sensing information from the environment and sharing it to a cloud service for processing. To tackle network issues involved in IoT and similar application computation, researchers have proposed bringing the computing cloud closer to data generators and consumers. One proposal is Fog computing cloud \cite{CISCO1} which lets network devices run cloud application logic on their native architecture. One may refer to the comprehensive survey on  edge and fog computing \cite{jj}. Mohan and Kangasharju \cite{EFC} introduced a distributed Edge-Fog Cloud framework for IoT computation. However, from the algorithmic point of view, it highly depends on the quadratic assignment problem framework, which is very impractical. In this paper, we give a new distributed framework which can avoid the  quadratic assignment problem. 

\subsection{Related Work}
\subsubsection{Network Creation Game}
Fabrikant et al. \cite{FA} formulated a well-studied game-theoretic model of network creation. Given $n$ agents (players), each one is corresponding to a vertex. The network is created by laying down connections (egdes) between vertices. The strategy of each agent $v$ is to choose a subset of the vertices $S_v$ to be connected with. In this formulation, each edge may appear twice, if $v$ lays a connection to $\omega$ and $\omega$ also lays a connection to $v$. Let a non-negative parameter $\alpha$ be the cost of making a connection. In this model, each agent desires to be close to other agents, besides spending little for buying links. Thus the total cost of each agent is defined as: 

$$cost(v)=\alpha\left|S_v \right|+ \sum\limits_{\omega} dist (v, \omega) ,$$

\noindent
where the sum is for all vertices in the created network and $dist (v, \omega)$ is the hop distance between vertices $v$ and  $\omega$, i.e. the number of edges on the shortest path between the two vertices. If there is no path between $v$ and $\omega$, the hop distance is infinity. Once a connection (link) is made in the network, all agents can use it regardless who paid the construction cost. From the cost function, we can know that on the one hand, each agent should pay some construction cost if he wants to connect to other agents, on the other hand, he prefers to be close to other nodes in the network.

\subsubsection{Quadratic Assignment Problem}

In 1957, Koopmans and Beckmann first introduced the  quadratic assignment problem (QAP) as a mathematical model for the locating a set of indivisible economical activities \cite{KB}. Later, Lawler \cite{LA} introduced a more general version of the QAP, in which a four-dimensional array of coefficients were given instead of the two matrices version in the Koopmans and Beckmann paper. It is known that QAP is one of the most difficult optimization problems. If $n>20,$ no 
algorithm is known with reasonable practical computation time for finding an exact solution. 
With $30$ nodes, even applying the advanced Kuhn-Munkres solver available from QAPLIB \cite{LIB}, it would take more than one week to get the exact solution\cite{EFC}. Sahni and Gonzalez \cite{SG} proved that QAP is NP-hard. Furthermore, it is also inapproximable, in the sense that it is impossible to find an $\epsilon$-approximate solution in polynomial time, unless P=NP. Those results hold even when the Koopamans-Beckmann coefficient matrices satisfy the triangle inequality \cite{TI}. The \emph{dense linear arrangement problem}, which is a special case of the Koopmans-Beckmann QAP, a polynomial time approximation scheme(PTAS) was found by Arora, Frieze and Kaplan \cite{ARORA}. There are also local search, SDP relaxations, ADMM, machine learning, sparsity as well as Gauss-Seidel decomposition based method etc. to study the QAP \cite{LOCAL}\cite{ADMM}\cite{NN}\cite{SPARSE}\cite{GS}.The reader is referred to the survey papers \cite{RE} \cite{S} for more information.

\section{Preliminaries}

In this paper, we propose a two level game model: edge-fog network creation game and job edge-fog interconnection network creation game. In the edge-fog network level, we adopt the traditional network creation game: SumGame; in the job edge-fog interconnection network level, we play an assignment game, i.e., each job chooses fog and edge devices to connect, to minimize its own cost. We consider the game-theoretic formation of interconnections between and within two networks:

Edge-Fog network $G_1=(V_1, E_1)$ and Job network $G_2=(V_2, E_2).$  The Game within the edge-fog network has $n_1$ players $\{1,...,n_1\}$, this set is denoted by $[n_1].$ The strategy space of each player is the set $S_i=2^{[n_1]-\{i\}}.$ The job edge-fog interconnection game has $n_2$ players $\{1,...,n_2\}$ in the set $[n_2].$ The strategy space of each player in job networks is the set $S_j=2^{[n_1]}.$ Let $\gamma(G)$ be the size of Minimum Dominating Set. The total interconnection edges between $G_1$ and $G_2$ is $I:=\cup_{j} S_{v_j}.$ In this paper we assume $n_1=n_2.$

The cost function for edge-fog  player $v_i$ is defined to be 

$$c(v_i)=\alpha\left|S_{v_i} \right|+ \sum\limits_{\omega\in G_1(V)} dist (v_i, \omega) .$$

The cost function for each job device $v_j$ is given  in two different types of games: 

\textbf{(Type I Game)}

$$c_1(v_j)=\beta|S_{v_j}|-\frac{1}{\sum\limits_{\omega\in G_1(V)} dist (v_j, \omega)} $$ 

\textbf{(Type II Game)}

$$c_2(v_j)=\beta|S_{v_j}|+\sum\limits_{\omega\in G_1(V)} dist (v_j, \omega)$$ 

Jackson and Wolinsky introduced \cite{JACK} a canonical problem in network formation which involves distance based utilities. Based on this, Shahrivar and Sundaram \cite{SS} introduced the interconnection network creation game. The Type I cost function in this paper is more specific. 

In this paper, in the Type II game, we consider about the Network-Only Cost (NOC) \cite{NOC} for each job device. Assume that each job device can connect to more than one edge-fog device,  and each edge-fog device can process several job devices. We also assume that the number of edge-fog devices and the number of job devices are the same. In case there are more job devices than edge-fog devices, we can split the existing edge-fog devices into virtual devices so that their number becomes equal to the number of jobs. Otherwise, the superfluous devices can be ignored.

\section{Best Response Strategies}

\begin{definition}\cite{GR} \cite{FD}
	Best response is the strategy (or strategies) which produces the most favorable outcome for a player, taking other players' strategies as given. 
\end{definition}

\begin{theorem} \cite{FA}
	It is NP-hard for each edge-fog player to find the best response strategy.
\end{theorem}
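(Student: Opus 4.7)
The plan is to reduce from Minimum Dominating Set, reproducing the argument of Fabrikant et al.~\cite{FA} for the first-level SumGame cost. Given a dominating-set instance consisting of a graph $H$ and an integer $k$, I would fix the strategies of the other $n_1-1$ edge-fog players so that the subnetwork they induce among themselves is exactly $H$ and none of them has purchased an edge incident to $v_i$. The best-response question for $v_i$ then reduces to choosing $S \subseteq V(H)$ to minimize $c(v_i) = \alpha|S| + \sum_{\omega \in V(H)} dist(v_i, \omega)$.

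The first step is to turn this into a clean combinatorial expression. After a preprocessing of $H$ that guarantees diameter at most $2$ without changing the domination number (for instance, by subdividing and padding with a few auxiliary dominated vertices), every $\omega \in V(H)$ contributes $1$ to the distance sum if $\omega \in S$, $2$ if $\omega \in N_H[S] \setminus S$, and $3$ otherwise. Collecting these contributions gives
$$\sum_{\omega \in V(H)} dist(v_i, \omega) \;=\; 3|V(H)| - |S| - |N_H[S]|,$$
so that $c(v_i) = (\alpha-1)|S| - |N_H[S]| + 3|V(H)|$.

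The second step is to calibrate $\alpha$ so that the optimum of $c(v_i)$ is forced to be a minimum dominating set. Choosing $\alpha$ strictly between $1$ and $2$, any set $S$ that leaves even one vertex undominated loses at least $1$ in the term $-|N_H[S]|$, which cannot be recouped by shaving off a single element from $S$ in the term $(\alpha-1)|S|$. Consequently every optimal $S$ must dominate $V(H)$, and among dominating sets minimizing $(\alpha-1)|S|$ is equivalent to minimizing $|S|$. Hence a polynomial-time best-response algorithm would decide Minimum Dominating Set in polynomial time, contradicting its NP-hardness.

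The main obstacle is the diameter-$2$ preprocessing: a naive ``add one universal vertex'' trick collapses the domination number, so the auxiliary gadget must cap distances at $2$ and preserve the domination threshold simultaneously. In the two-level setting of the current paper one should additionally verify that no path through the interconnection layer $I$ and the job network $G_2$ creates a shortcut between vertices of $H$, but since the edge-fog cost only measures distances within $G_1$ this is automatic. Once the topological bookkeeping is in place, the rest of the argument is a direct algebraic reading of the cost function.
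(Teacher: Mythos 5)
The paper does not actually prove this statement---it is imported verbatim from Fabrikant et al.~\cite{FA}---so the relevant comparison is to their original reduction and to the paper's own proof of the analogous result for job players (Theorem~2). Your reduction from Minimum Dominating Set with $1<\alpha<2$ is exactly the right one and matches both of those in spirit. However, as written your argument has a genuine gap that you yourself flag and then leave open: the entire algebraic derivation $c(v_i)=(\alpha-1)|S|-|N_H[S]|+3|V(H)|$ rests on the claim that every undominated vertex sits at distance \emph{exactly} $3$, which requires $H$ to have diameter at most $2$, and you never exhibit the ``preprocessing gadget'' that enforces diameter $2$ while preserving the domination number. That is not a routine detail to wave at: the obvious constructions (universal vertex, etc.) destroy the domination number, as you note, and the NP-hardness of Dominating Set restricted to diameter-$2$ graphs is itself a statement that needs a proof or a citation. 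A proof that ends by naming its own main obstacle as unresolved is not complete.

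The gap is avoidable, because the exact distance formula is not needed. Argue by local improvement instead: if some vertex $u$ lies outside $N_H[S]$, then every path from $v_i$ to $u$ passes through some $s\in S$ with $dist_H(s,u)\ge 2$, so $dist(v_i,u)\ge 3$; adding $u$ to $S$ costs $\alpha<2$ and reduces the distance to $u$ alone by at least $2$ (and cannot increase any other distance), so $S$ was not a best response. Hence every best response is a dominating set of $H$, and for a dominating set the cost is exactly $\alpha|S|+|S|+2\bigl(|V(H)|-|S|\bigr)=2|V(H)|+(\alpha-1)|S|$, which is minimized precisely by a minimum dominating set. This inequality-based version is the argument actually used in \cite{FA} and mirrored in the paper's proof for job players; it requires no diameter assumption and no auxiliary gadget. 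I would also make explicit that you are reducing the \emph{decision} version (does $H$ have a dominating set of size $\le k$?) to one call of a best-response oracle, which is what NP-hardness of the search problem formally means here.
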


\begin{theorem}
	It is NP-hard for each job player to find the best response strategy (for both game types). 
\end{theorem}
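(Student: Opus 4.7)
The plan is to reduce from Minimum Dominating Set, adapting the classical argument of Fabrikant et al.\ to the job-player setting. Given a graph $G=(V,E)$ with $|V|=n$, I would instantiate the best-response problem by taking the edge-fog network $G_1=G$ (possibly modified for Type I as described below), placing a single active job player $v_j$ whose strategies are subsets of $V_1$, and setting the cost parameter $\beta$ according to the game type.

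The central structural observation is that when $v_j$ chooses $S\subseteq V_1$ we have $dist(v_j,\omega)=1$ for $\omega\in S$ and $dist(v_j,\omega)=1+dist_{G_1}(S,\omega)$ for $\omega\notin S$. Therefore $\sum_\omega dist(v_j,\omega)=2n-|S|$ exactly when $S$ dominates $G$, while $\sum_\omega dist(v_j,\omega)\geq 2n-|S|+1$ if $S$ fails to dominate. This one-unit dichotomy is the lever by which a minimum dominating set can be encoded into a best response.

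For Type II, I would choose $\beta=1+1/(n+1)$. A dominating set of size $k$ then has cost $c_2=(\beta-1)k+2n=k/(n+1)+2n$, strictly increasing in $k$, so the best dominating set has size $\gamma(G)$. Any non-dominating $S$ of size $k$ has cost at least $(\beta-1)k+2n+1$; since $(\beta-1)(n-\gamma(G))<1$, the $+1$ term rules out every non-dominating strategy, even those with $|S|<\gamma(G)$. Consequently the best response is exactly a minimum dominating set, so computing it is NP-hard.

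Type I is more delicate because the cost $\beta|S|-1/\sum dist$ is nonlinear: taking $|S|$ very small can yield a small cost since $\beta|S|$ is tiny and $1/\sum dist$ is bounded by $O(1/n)$, so the naive reduction with $G_1=G$ does not separate the optimum from tiny non-dominating strategies for any choice of $\beta$. My plan is to amplify the dichotomy by augmenting $G$: attach $N=\Theta(n)$ pendant vertices to each $v\in V$, so that every vertex left undominated by $S$ also drags $N$ pendants at distance $\geq 4$ from $v_j$, blowing up $\sum dist$ by $\Theta(N)$ rather than by $1$. With $\beta$ tuned to a narrow window of order $\Theta(1/(Nn^2))$, one can then argue that cost strictly increases with $|S|$ across dominating sets of the augmented graph and that every non-dominating set (including size-one sets) is strictly worse than the minimum dominating set. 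The hard part will be precisely this last verification, reconciling the lower bound on $\beta$ required for monotonicity across dominating sets with the upper bound required for ruling out small non-dominating sets; this tension is what forces the gadget to scale with $n$ and motivates the particular choices above.
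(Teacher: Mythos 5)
Your Type II argument is correct and is essentially the paper's reduction from Minimum Dominating Set, but made quantitatively precise where the paper is not: the paper only stipulates $\beta>1$, yet for $\beta\geq 2$ a non-dominating set of size $\gamma(G)-1$ can beat a minimum dominating set (the $+1$ distance penalty no longer outweighs the saved edge cost $\beta-1$), so the paper's claim that the best response is a minimum dominating set is false in that range. Your choice $\beta=1+\frac{1}{n+1}$, giving $(\beta-1)(\gamma-k)\leq\frac{n}{n+1}<1$, closes exactly that hole, in the spirit of Fabrikant et al.'s original $1<\alpha<2$ analysis. (One tiny slip: the quantity you need to bound is $(\beta-1)(\gamma(G)-k)$, not $(\beta-1)(n-\gamma(G))$, but your $\beta$ handles it since $\gamma(G)\leq n$.)

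For Type I, however, your proof is not complete, and the part you defer is precisely where it breaks. You correctly diagnose that the naive reduction fails (the paper does not: with $G_1=G$ and $\beta>1$ the Type I best response is simply $S=\emptyset$ with cost $0$, so the paper's proof establishes nothing for Type I). But the pendant gadget with parameters $N=\Theta(n)$ and $\beta=\Theta(1/(Nn^2))$ does not open a valid window either. With $N$ pendants per vertex, a dominating set of size $k$ yields $\sum dist = D_k=(2+3N)n-(1+N)k$; monotonicity of $\beta k - 1/D_k$ over dominating sets requires $\beta>\frac{N+1}{D_nD_{n-1}}\approx\frac{N+1}{(1+2N)^2n^2}$, while ruling out a non-dominating set of size $1$ requires roughly $\beta<\frac{(N+1)\gamma}{(\gamma-1)\,D_\gamma(D_1+N+1)}\approx\frac{(N+1)\gamma}{(\gamma-1)(2+3N)^2n^2}$. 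The ratio $D_1/D_n\to 3/2$ is independent of $N$, so the lower threshold exceeds the upper one for every $\gamma(G)\geq 2$ no matter how $N$ scales; increasing the \emph{number} of pendants amplifies both marginal benefits equally and cancels out. What is actually needed is a gadget that makes $\sum dist$ nearly \emph{constant} across all dominating sets while keeping a large non-domination penalty --- e.g., hanging the pendant mass at depth $\Theta(n)$ rather than depth $1$, so that $D_0/D_n=1+O(1/n)$ --- together with a correspondingly delicate choice of $\beta$; none of this is in your sketch (nor in the paper). As written, the Type I half of the theorem remains unproved.
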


\begin{proof}
	A dominating set for a graph $G=(V,E)$ is a subset $D$ of $V$ such that every vertex not in $D$ is adjacent to at least one member of $D$.	It is well known that the Minimum Dominating Set problem is NP-complete(it is among Karp's original 21 NP-Complete problems \cite{KARP}). It is easy to show that the best response strategy is in NP as it is verifiable in polynomial time for a given strategy. Next, we show a polynomial time many to one reduction from the Minimum Dominating Set problem to the best response strategy of job player. When $\beta>1,$ buying an edge is more expensive than the edge distance(as the graph is unweighted, every graph edge has length $1$), thus each job player prefers to connect to minimum number of vertices of $G_1$ with a relatively small total distance to all vertices of $G_1.$ Thus the best strategy would be to connect smallest number of vertices of $G_1$ to make sure that the distance from this job player to other vertices of $G_1$ which are not directly connected to is $2.$ Making more connections would only increase the total distance. Hence the cost is minimized when the vertices of $G_1$ which the job player connects to form the Minimum Dominating Set of $G_1.$
	
\end{proof}

\section{Nash Equilibrium and Price of Anarchy}

In Game Theory, the most beautiful fundamental concept is Nash Equilibrium, invented  by Nobel Prize laureate John Forbes Nash Jr.\cite{NASH}. 

\begin{definition}
	\textbf{Nash Equilibrium} is under which no agent can reduce its cost by unilaterally changing its strategy, if  others remain in the same strategy.  
	
\end{definition}

Nash also proved the existence of Nash Equilibrium in $n$ player game with mixed strategies in the same paper \cite{NASH}, using the Kakutani fixed-point theorem (Note: The Kakutani fixed-point theorem  is a min-max type result; a variant was also obtained by the Brouwer fixed-point theorem in the 1951 paper of Nash \cite{NASH1}).  It is known that computing the Nash Equilibrium belongs to the complexity class PPAD (Polynomial Parity Arguments on Directed graphs) \cite{CHEN}\cite{DAS}, which belongs to the TFNP class defined by Christos Papadimitriou \cite{PAPA} -- the complexity class of function problems that always guarantee the existence of the solution for NP search problems. 

In graph or network based game, we can also define the {\em Network Structure Equilibrium}, as follows: 

\begin{definition}
	\textbf{Network Structure Equilibrium} is the stable state of the network in which  no player has the incentive to change its connections. 
\end{definition}

The Price of Anarchy (PoA) \cite{WORST} is a concept in economics and game theory that measures how the efficiency of a system degrades due to selfish behavior of its agents.

\begin{definition}
	The \textbf{Price of Anarchy (PoA)} is the ratio of the maximum social cost incurred by any Nash Equilibrium and the minimum possible social cost incurred by any tuple of strategies. 
	
\end{definition}

The Nash equilibrium for edge-fog network is well studied in the literature. One can find the state of the art list of exiting results in \cite{sa}.  In this paper, we mainly focus on the second level game.

The social welfare of edge-fog network is the total cost of both edge and fog devices in this network. Its social welfare function is as follows: 

$$c(G_1)=\sum_{i}c_i=\alpha|E|+\sum_{i,j}d_{G_1}(i,j)$$

Based on the two types of utility functions, the following two cost functions of the job network can be obtained. 
\\
\textbf{(Type I Game)}
$$c_1(G_2)=\sum_{j} c_j=\beta\sum_{j} |S_{v_j}|-\sum_{j}\frac{1}{\sum_{\omega\in G_1}dist(v_j,\omega)}$$ 
\\
\textbf{(Type II Game)}
$$c_2(G_2)=\sum_{j} c_j=\beta\sum_{j} |S_{v_j}|+\sum_{j} \sum_{\omega\in G_1} dist(v_j,\omega)$$ 

Based on the fact that every pair of vertices not connected to each other by an edge is at least distance 2 away from each, \cite{FA} obtains a lower bound for $c(G_1).$

$$c(G_1)\geq \alpha|E|+2|E|+2(n(n-1)-2|E|)$$
$$=2n(n-1)+(\alpha-2)|E|.$$

Similarly, we can get the obivious lower bound for $c_1(G_2)$ and $c_2(G_2).$

\subsection{Type I Game}

\begin{lemma} (Reverse Cauchy-Schwarz Ineqaulity)
There exists a constant $c>0$, for  $0<a_i<U, i=1,...,n,$  $$\sum_{i=1}^{n}\frac{1}{a_i}\leq \frac{cU^2n^2}{\sum_{i=1}^{n} a_i}.$$
\end{lemma}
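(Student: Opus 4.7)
The plan is to prove the claim via a Pólya--Szegő / Kantorovich-type inequality, which is the classical ``reverse Cauchy--Schwarz'' tool. First I would record the forward Cauchy--Schwarz bound
$$n^{2} \;=\; \Bigl(\sum_{i=1}^{n} \sqrt{a_i}\cdot \tfrac{1}{\sqrt{a_i}}\Bigr)^{2} \;\leq\; \Bigl(\sum_{i=1}^{n} a_i\Bigr)\Bigl(\sum_{i=1}^{n}\tfrac{1}{a_i}\Bigr),$$
to make explicit which direction we are reversing. The reverse statement I would target is the Pólya--Szegő inequality: if $0<m\leq a_i\leq M$ for all $i$, then
$$\Bigl(\sum_{i=1}^{n} a_i\Bigr)\Bigl(\sum_{i=1}^{n}\tfrac{1}{a_i}\Bigr) \;\leq\; \frac{(m+M)^{2}}{4mM}\,n^{2}.$$

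The short proof I would present for Pólya--Szegő rests on the elementary observation $(a_i-m)(M-a_i)\geq 0$, which rearranges to $\frac{a_i}{M}+\frac{m}{a_i}\leq 1+\frac{m}{M}$. Summing over $i$, applying AM--GM to the two resulting sums $\sum a_i$ and $\sum 1/a_i$, and rearranging yields the stated bound in a handful of lines. Dividing by $\sum a_i$ then gives
$$\sum_{i=1}^{n}\frac{1}{a_i}\;\leq\;\frac{(m+M)^{2}}{4mM}\cdot\frac{n^{2}}{\sum_{i=1}^{n}a_i}.$$
Setting $M=U$ and absorbing the factor $\frac{(m+U)^{2}}{4mU^{3}}$ into the constant $c$ produces the form $cU^{2}n^{2}/\sum a_i$ stated in the lemma.

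The main obstacle is that the lemma as written has no lower bound on the $a_i$; with $a_i\to 0$ the left side diverges while $\sum a_i$ stays bounded, so no universal constant $c$ can exist. I would therefore flag the implicit hypothesis $a_i\geq m$ that must accompany the statement, and justify it from the intended application: in the subsequent Price of Anarchy analysis each $a_i=\sum_{\omega\in G_1}\operatorname{dist}(v_j,\omega)$ is a sum of unit hop distances over the $n$ vertices of $G_1$, so $a_i\geq n\geq 1$ holds automatically. Fixing $m=1$ (or $m=n$) then makes $c=(m+U)^{2}/(4mU^{3})$ a bona fide constant once $U$ is a parameter of the network, and the Pólya--Szegő step completes the proof.
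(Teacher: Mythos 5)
The paper states this lemma with no proof at all (the acknowledgement merely credits a discussion of the ``reverse Cauchy--Schwarz inequality''), so there is nothing of the authors' to compare your argument against; what you have written is, in effect, the missing proof. Your route is sound: the P\'olya--Szeg\H{o} step via $(a_i-m)(M-a_i)\geq 0$, summation, and AM--GM does give $\bigl(\sum_i a_i\bigr)\bigl(\sum_i 1/a_i\bigr)\leq \frac{(m+M)^2}{4mM}\,n^2$, and dividing by $\sum_i a_i$ yields the claimed shape. The substantive contribution of your write-up, though, is the observation that the lemma is \emph{false as stated}: with $n\geq 2$, letting a single $a_i\to 0$ sends the left side to infinity while $\sum_i a_i$, and hence the right side, stays bounded, so a lower bound $a_i\geq m>0$ is an unavoidable extra hypothesis, and the resulting ``constant'' $c=(m+U)^2/(4mU^3)$ genuinely depends on $m$ and $U$ (it is at most $1$ once $m\geq 1\leq U$). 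Your justification of that hypothesis from the application is also correct: in the Type~I bound the lemma is applied to $a_j=2n-|S_{v_j}|\in[n,2n]$, so $m=n$ is available, and in that regime even the trivial estimate $\sum_j 1/a_j\leq n/m$ together with $\sum_j a_j\leq nU$ already gives the lemma with $c=1/(mU)$, without invoking P\'olya--Szeg\H{o}; the Kantorovich-type argument buys a sharper constant, namely $9/8$ in place of $cU^2=2n$ for the paper's parameters. One caveat worth flagging: the paper's subsequent Price-of-Anarchy formulas, such as $1/(2-4\sqrt{c/\beta})$, treat $c$ as a fixed numerical constant whose size matters (they require $c<\beta/4$ to be meaningful), so the explicit value your proof produces --- not merely the existence of some $c$ --- is what the downstream theorems actually need; that is an issue with how the paper uses the lemma rather than with your proof of it.
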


\begin{theorem} 
	The lower bound function of social cost for job network for type I game is $\beta|I|-\frac{4cn^4}{2n^2-|I|}.$
\end{theorem}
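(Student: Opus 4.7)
The plan is to bound the two parts of $c_1(G_2)=\beta\sum_j |S_{v_j}|-\sum_j \frac{1}{\sum_{\omega\in G_1(V)} dist(v_j,\omega)}$ separately and then combine them via the reverse Cauchy--Schwarz lemma. For the linear term, I would identify $\sum_j |S_{v_j}|$ with $|I|$: since $I=\cup_j S_{v_j}$ collects all interconnection edges counted from the job-side endpoint, each interconnection link is charged exactly once, producing the $\beta|I|$ contribution to the lower bound.

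For the harmonic term, let $a_j:=\sum_{\omega\in G_1(V)} dist(v_j,\omega)$. The first step is a lower bound on each $a_j$: $v_j$ is directly adjacent to the $|S_{v_j}|$ vertices in $S_{v_j}$ (distance $1$), while every remaining edge-fog node lies at distance at least $2$, so $a_j\geq |S_{v_j}|+2(n-|S_{v_j}|)=2n-|S_{v_j}|$. Summing over $j$ yields $\sum_j a_j\geq 2n^2-|I|$, which is precisely the denominator appearing in the target expression.

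Next, I would invoke the preceding reverse Cauchy--Schwarz lemma with the uniform upper bound $U=2n$. Plugging in gives
$$\sum_j \frac{1}{a_j}\leq \frac{c(2n)^2 n^2}{\sum_j a_j}\leq \frac{4cn^4}{2n^2-|I|},$$
and substituting into $c_1(G_2)$ delivers the claimed lower bound $\beta|I|-\frac{4cn^4}{2n^2-|I|}$.

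The main obstacle I anticipate is justifying the uniform choice $U=2n$. The lemma requires $a_j<U$ for every $j$, which holds cleanly when $S_{v_j}$ is a dominating set of $G_1$ (all distances are then at most $2$), matching the best-response regime established in the previous section. In configurations where some $a_j$ exceeds $2n$, the corresponding $\frac{1}{a_j}$ only shrinks, so the harmonic term is even smaller and the lower bound on $c_1(G_2)$ is strengthened rather than weakened. I would spell out this monotonicity argument, or alternatively restrict the statement to the strategy profiles relevant for the Nash/Price-of-Anarchy analysis, where the dominating-set structure is guaranteed.
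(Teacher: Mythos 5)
Your proof is correct and follows essentially the same route as the paper: identify $\sum_j|S_{v_j}|$ with $|I|$, lower-bound each distance sum by $2n-|S_{v_j}|$, and apply the reverse Cauchy--Schwarz lemma with $U=2n$ to obtain the $\frac{4cn^4}{2n^2-|I|}$ term. The only difference is that the paper applies the lemma directly to the surrogate quantities $2n-|S_{v_j}|$ (which automatically lie in $(0,2n]$, and whose sum equals $2n^2-|I|$ exactly), thereby sidestepping the hypothesis issue you flag about some $a_j$ exceeding $2n$; your proposed repair via the termwise bound $\frac{1}{a_j}\leq\frac{1}{2n-|S_{v_j}|}$ amounts to the same thing.
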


\begin{proof}
	
	Based on the fact that every pair of vertices not connected to each other between $G_1$ and $G_2$ are at least $2$ away from each other. The job players buy $|I|$ edges with a total cost $\beta|I|.$ Each vertex of $G_1$ connect to $G_2$ is $1$ away from the corresponding vertex of $G_2,$ other vertices of $G_1$ not connect to $G_2$, are at least $2$ away from the corresponding vertices of $G_2,$ thus this part the distance related cost is $-\sum_{j}\frac{1}{|S_{v_j}|+2(n-|S_{v_j}|)}.$ Put the two parts together, one can get the following inequality: 
	
	\begin{equation}
	\begin{aligned}	
	c_1(G_2)&\geq \beta|I|-\sum_{j}\frac{1}{|S_{v_j}|+2(n-|S_{v_j}|)}\\
	&=\beta|I|-\sum_{j}\frac{1}{(2n-|S_{v_j}|)}\\
	&\geq \beta|I|-\frac{c(2n)^2n^2}{\sum_{j}(2n-|S_{v_j}|)}\\
    &=\beta|I|-\frac{4cn^4}{2n^2-|I|}.
    \end{aligned}
    \end{equation}

\end{proof}

In this type of game, $c_1(G_2)$ is an unimodal function, which has a unique minimum value.  Take the deravitive, one can get $$c'_1(G_2)=\beta-\frac{4cn^4}{(2n^2-|I|)^2}.$$
Let $c'_1(G_2)=0,$ then $|I|=2n^2(1-\sqrt{\frac{c}{\beta}}).$

The social optimum is obtained in the saddle point. Thus $$c^*_1(G_2)=2\beta n^2(1-\sqrt{\frac{c}{\beta}})-\frac{4cn^4}{2n^2-2n^2(1-\sqrt{\frac{c}{\beta}})}=2n^2(\beta-2\sqrt{c\beta}).$$

\begin{theorem}
	When $0<\beta \leq 1,$ the job network PoA in Type I game is at most $\frac{1}{2-4\sqrt{\frac{c}{\beta}}}.$
\end{theorem}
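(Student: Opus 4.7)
The plan is to combine two already-established inequalities: an upper bound on the worst-case Nash equilibrium cost of the job network, and the lower bound on the social optimum derived immediately above the theorem statement. Dividing the former by the latter will yield the desired PoA estimate.

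First, I would produce a uniform upper bound on $c_1(G_2)$ that is valid at every strategy profile, and hence at every Nash equilibrium. Since each of the $n$ job players can connect to at most $n$ edge-fog devices, $|S_{v_j}|\leq n$ for every $j$, so $|I|=\sum_j |S_{v_j}|\leq n^2$. Moreover, the distance-based contribution $-\sum_j 1/(\sum_{\omega\in G_1(V)} dist(v_j,\omega))$ is non-positive (treating $1/\infty=0$ for any disconnected job player). Combining these facts gives the uniform bound $c_1(G_2)\leq \beta|I|\leq \beta n^2$ at every NE.

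Second, I would invoke the lower bound on the optimum that was already established. The chain used to prove the previous theorem shows $c_1(G_2)\geq L(|I|):=\beta|I|-\frac{4cn^4}{2n^2-|I|}$ for any strategy profile, and minimizing $L$ over $|I|$ gave $\min_{|I|} L(|I|)=2n^2(\beta-2\sqrt{c\beta})$. Hence $c_1^\ast(G_2)\geq 2n^2(\beta-2\sqrt{c\beta})$. Plugging the NE upper bound and this optimum lower bound into the PoA ratio and factoring $\sqrt{\beta}$ from numerator and denominator gives
\[
\mathrm{PoA}\;\leq\;\frac{\beta n^2}{2n^2(\beta-2\sqrt{c\beta})}\;=\;\frac{1}{2-4\sqrt{c/\beta}},
\]
which is the claimed bound.

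The main obstacle is not algebraic but logical: I must verify that the unconstrained minimizer $|I|=2n^2(1-\sqrt{c/\beta})$ of $L$ lies in the feasible window $[0,n^2]$ (so the saddle-point value is a legitimate lower bound on the optimum) and that the denominator $2-4\sqrt{c/\beta}$ is positive (so the statement is non-vacuous). The role of the hypothesis $0<\beta\leq 1$ is presumably to place the analysis in the regime where buying interconnection links is cheap enough that the worst-case equilibrium has $|I|$ close to the extreme value $n^2$, making the numerator bound $\beta n^2$ close to tight rather than wastefully loose; I would record this interpretation and check that the resulting bound on PoA is indeed informative in that range.
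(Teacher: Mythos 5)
Your proposal is correct and ends with exactly the paper's ratio computation, but you obtain the numerator by a genuinely different (and more robust) route. The paper asserts that for $0<\beta\le 1$ the worst Nash equilibrium is the complete bipartite graph between $G_2$ and $G_1$, with cost exactly $\beta n^2-1$, on the grounds that an edge costing $\beta\le 1$ is always worth buying; in the Type I game this is delicate, since the marginal benefit of an extra interconnection edge to a player whose total distance is $D$ is only $\frac{1}{D-1}-\frac{1}{D}=O(1/n^2)$, so the complete bipartite graph need not actually be an equilibrium for most $\beta$ in this range. Your uniform bound $c_1(G_2)\le\beta|I|\le\beta n^2$, valid at every strategy profile because the distance term is non-positive, dominates the cost of every equilibrium without having to identify one, and loses only the additive $-1$ that the paper discards in the next line anyway. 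The caveats you flag at the end are real but are shared with (and glossed over by) the paper: the value $2n^2(\beta-2\sqrt{c\beta})$ used for the optimum comes from the critical point of $L(|I|)=\beta|I|-\frac{4cn^4}{2n^2-|I|}$, yet $L$ is concave for $|I|<2n^2$, so that critical point is a maximum of $L$ rather than a minimum, the true minimum over the feasible range $[0,n^2]$ sits at an endpoint, and positivity of the denominator $2-4\sqrt{c/\beta}$ requires $\beta>4c$. Since you explicitly take the optimum value as given from the preceding theorem, your argument is no weaker than the paper's on these points, and is cleaner on the equilibrium side.
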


\begin{proof}

When $0<\beta \leq 1,$ the Nash Equilibrium is a complete graph between the vertices of $G_1$ and $G_2.$ As the cost of buying one edge is less than the edge distance $1,$ players would buy the most to make their total cost minimum. 

The corresponding PoA is as follows:

$$PoA=\frac{\beta n^2-1}{2n^2(\beta-2\sqrt{c\beta})}\leq \frac{\beta n^2}{2n^2(\beta-2\sqrt{c\beta})}=\frac{\beta}{2(\beta-2\sqrt{c\beta})}=\frac{1}{2-4\sqrt{\frac{c}{\beta}}}.$$
\end{proof}

\begin{theorem}
	When $\beta>1,$ the job network PoA in Type I game is at least $$\frac{\gamma(G_1)}{2n(1-2\sqrt{\frac{c}{\beta}})}.$$
\end{theorem}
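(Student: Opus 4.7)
The plan is to combine the characterization of best responses from the earlier NP-hardness theorem with the explicit social optimum already computed at the top of this subsection. Since the PoA is a ratio, a lower bound requires (a) exhibiting some Nash equilibrium whose social cost is large and (b) dividing by the given optimum $c_1^*(G_2)=2n^2(\beta-2\sqrt{c\beta})$.

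First I would argue that when $\beta>1$, every Nash equilibrium has a very constrained structure. By the proof of the preceding best-response theorem, when $\beta>1$ each job player's unique best response is to purchase edges to a minimum dominating set of $G_1$ (any extra edge raises the $\beta|S_{v_j}|$ term by more than the distance saving it buys). Since the job players' strategies do not interact with each other's cost (each $v_j$'s cost depends only on its own edges and on $G_1$), the profile in which every $v_j$ independently plays a minimum dominating set of $G_1$ is a Nash equilibrium with $|S_{v_j}|=\gamma(G_1)$ and $|I|=n\gamma(G_1)$.

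Second I would estimate the social cost of this equilibrium from below. The purchase contribution is exactly $\beta n\gamma(G_1)$. For the distance term, each $v_j$ reaches $\gamma(G_1)$ vertices of $G_1$ at hop distance $1$ and the remaining $n-\gamma(G_1)$ vertices at hop distance at most $2$, so
\begin{equation*}
\sum_{\omega\in G_1}\mathrm{dist}(v_j,\omega)\;\leq\;\gamma(G_1)+2\bigl(n-\gamma(G_1)\bigr)\;=\;2n-\gamma(G_1),
\end{equation*}
and hence
\begin{equation*}
\sum_{j}\frac{1}{\sum_{\omega\in G_1}\mathrm{dist}(v_j,\omega)}\;\leq\;\frac{n}{2n-\gamma(G_1)}\;\leq\;1.
\end{equation*}
Combining the two pieces yields $c_1(G_2)^{\mathrm{NE}}\geq \beta n\gamma(G_1)-1$.

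Third I would form the ratio with the social optimum and simplify. Dividing by $c_1^*(G_2)=2n^2(\beta-2\sqrt{c\beta})$ and absorbing the lower-order $-1$ into the leading term gives
\begin{equation*}
\mathrm{PoA}\;\geq\;\frac{\beta n\gamma(G_1)-1}{2n^2(\beta-2\sqrt{c\beta})}\;\sim\;\frac{\beta n\gamma(G_1)}{2n^2(\beta-2\sqrt{c\beta})}\;=\;\frac{\gamma(G_1)}{2n\bigl(1-2\sqrt{c/\beta}\bigr)},
\end{equation*}
which is the claimed bound. The main obstacle I expect is item (a) above: one must be careful that playing a minimum dominating set really is an equilibrium rather than merely a best response against some fixed environment, and that the lower bound on the equilibrium cost is not spoiled by the negative distance term. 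The latter is controlled by the simple observation that $\tfrac{n}{2n-\gamma(G_1)}\leq 1$, which is what makes the constant-sized slack acceptable when compared to the $\Theta(n\gamma(G_1))$ connection cost.
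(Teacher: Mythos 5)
Your proposal is correct and follows essentially the same route as the paper: identify the $\beta>1$ Nash equilibrium as each job player connecting to a minimum dominating set of $G_1$, lower-bound its social cost by $\beta n\gamma(G_1)$ minus a bounded negative distance term, and divide by the social optimum $2n^2(\beta-2\sqrt{c\beta})$. Your added care in checking that the profile is genuinely an equilibrium (because the job players' costs are independent of one another) and in bounding $\tfrac{n}{2n-\gamma(G_1)}\leq 1$ is a modest tightening of the paper's argument, not a different approach.
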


\begin{proof}
When $\beta>1,$ in Nash Equilibrium, the corresponding vertices of $G_1$ connected to $I$ is the minimum dominating set of $G_1.$ The cost of each player in Nash Equilibrium is 

$$c(v_j)\geq \beta\gamma(G_1)-\frac{1}{\gamma(G_1)+2(n-\gamma(G_1))}$$

Thus the corresponding PoA is: 

$$PoA\geq \frac{\beta\gamma(G_1)n-\frac{n}{2n-\gamma(G_1)}}{2n^2(\beta-2\sqrt{c\beta})}\geq \frac{\beta\gamma(G_1)}{2n(\beta-2\sqrt{c\beta})}=\frac{\gamma(G_1)}{2n(1-2\sqrt{\frac{c}{\beta}})}.$$

\end{proof}

\subsection{Type II Game}

\begin{theorem}
	The lower bound function of social cost for job network in type II game is $2n^2+(\beta-1)|I|.$
\end{theorem}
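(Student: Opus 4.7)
The plan is to mimic the two-part accounting used both in Fabrikant et al.'s lower bound on $c(G_1)$ and in the Type I analysis that precedes this theorem: separate the cost into the link-purchase term and the distance term, and bound each by the simplest structural fact available, namely that non-adjacent pairs lie at distance at least $2$.

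First I would handle the link term. By definition of $|I|$ we have $\sum_j |S_{v_j}| = |I|$, so the purchase contribution to $c_2(G_2)$ is exactly $\beta |I|$. This step is immediate and needs no inequality.

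Next I would bound the distance term. Fix a job vertex $v_j$. Exactly $|S_{v_j}|$ vertices $\omega \in G_1$ are directly connected to $v_j$, so each contributes $dist(v_j,\omega)=1$; every remaining vertex of $G_1$ is not adjacent to $v_j$ and therefore satisfies $dist(v_j,\omega)\ge 2$. Hence
\begin{equation*}
\sum_{\omega\in G_1(V)} dist(v_j,\omega)\ \ge\ |S_{v_j}|\cdot 1 + (n-|S_{v_j}|)\cdot 2\ =\ 2n-|S_{v_j}|.
\end{equation*}
Summing over the $n$ job players gives $\sum_j \sum_\omega dist(v_j,\omega) \ge 2n^2 - |I|$.

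Finally I would add the two bounds to obtain
\begin{equation*}
c_2(G_2)\ \ge\ \beta|I| + 2n^2 - |I|\ =\ 2n^2+(\beta-1)|I|,
\end{equation*}
which is exactly the stated lower bound. There is no real obstacle here; the argument is a direct adaptation of the Fabrikant--style edge/distance split, and unlike the Type~I case no auxiliary inequality (such as the reverse Cauchy--Schwarz inequality) is required, because the distance term enters additively rather than as a reciprocal. The only mild subtlety is noting that the $|S_{v_j}|$ terms cancel cleanly after summation, which is what produces the coefficient $(\beta-1)$ on $|I|$.
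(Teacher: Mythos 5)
Your proposal is correct and follows essentially the same argument as the paper: both decompose the cost into the purchase term $\beta|I|$, the $|I|$ adjacent pairs at distance $1$, and the $n^2-|I|$ non-adjacent pairs at distance at least $2$, then sum to get $2n^2+(\beta-1)|I|$. Your per-vertex accounting followed by summation is just a slightly more explicit presentation of the paper's global count.
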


\begin{proof}
	Based on the fact that every pair of vertices not connected to each other between $G_1$ and $G_2$ are at least $2$ away from each other. The job players buy $|I|$ edges with a total cost $\beta|I|.$ Each vertex of $G_2$ connect to $G_1$ is $1$ away from the connected vertices of $G_2,$ this part the total distance is $|I|.$ Other vertices in $G_1$ not connect to $G_2$, are at least $2$ away from the corresponding vertices in $G_2,$ thus this part the total ditance is  $2(n^2-|I|).$ Add the three parts together, one can get the following inequality: 
	
	$$c_2(G_2)\geq \beta|I|+|I|+2(n^2-|I|)$$
	$$= 2n^2+(\beta-1)|I|.$$
	
\end{proof}

\begin{theorem}
	When $0<\beta \leq1,$ the job network PoA in Type II game is $1.$	
\end{theorem}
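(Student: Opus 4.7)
The plan is to identify the structure of any Nash equilibrium when $0<\beta\le 1$, compute its social cost, and match it with the lower bound from the previous theorem. First, I would argue that in every Nash equilibrium the strategy $S_{v_j}$ of each job player $v_j$ must be a dominating set of $G_1$: if some $\omega\in V_1$ lay at distance $d\ge 3$ from $v_j$ (or $d=\infty$), then adding the edge $(v_j,\omega)$ would reduce $dist(v_j,\omega)$ from $d$ to $1$, saving at least $2$ units of distance cost at an extra edge cost of $\beta\le 1$ -- a strict improvement, contradicting equilibrium. Consequently, in equilibrium every fog vertex is at distance exactly $1$ or $2$ from each $v_j$.

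Next, with $k_j:=|S_{v_j}|$, the $k_j$ directly connected fog nodes contribute distance $1$ each and the remaining $n-k_j$ contribute distance exactly $2$, so
\begin{equation*}
c_2(v_j)=\beta k_j + k_j + 2(n-k_j) = 2n+(\beta-1)k_j.
\end{equation*}
For $\beta<1$ this expression is strictly decreasing in $k_j$, so the only equilibrium is $k_j=n$ for every $j$, namely the complete bipartite interconnection with $|I|=n^2$; for $\beta=1$ the per-player cost is $2n$ regardless of which dominating set is chosen, so many equilibria coexist. In both cases the total social cost of any Nash equilibrium equals $(\beta+1)n^2$.

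Third, I would invoke the preceding theorem's bound $c_2(G_2)\ge 2n^2+(\beta-1)|I|$. Since $0<\beta\le 1$ makes the coefficient $(\beta-1)$ nonpositive, the right-hand side is minimized by taking $|I|$ as large as possible; because $|I|\le n^2$, this gives $c_2(G_2)\ge(\beta+1)n^2$. The bound is attained by the complete bipartite construction, in which every pair is at distance $1$, so the social optimum is also $(\beta+1)n^2$. Dividing the worst Nash cost by this optimum yields a Price of Anarchy of exactly $1$.

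The main obstacle will be the boundary case $\beta=1$, where the equilibrium is only weak and many Nash profiles (one per choice of dominating set per player) exist. The key observation that makes the argument go through is that at $\beta=1$ the edge cost and the per-connection distance saving cancel exactly, so the player cost $2n+(\beta-1)k_j$ is independent of $k_j$; hence every such Nash profile attains the same total $(\beta+1)n^2$ and the PoA equals $1$ even in this degenerate regime.
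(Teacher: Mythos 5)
Your proof is correct and follows essentially the same route as the paper: characterize the Nash equilibria, show the social optimum equals $(\beta+1)n^2$ via the lower bound $c_2(G_2)\geq 2n^2+(\beta-1)|I|$, and take the ratio. You are in fact more careful than the paper at the boundary $\beta=1$, where the paper's justification (``buying one connection edge is cheaper than the edge distance $1$'') is no longer strictly true and the complete bipartite graph is not the unique equilibrium; your observation that every equilibrium strategy must be a dominating set with per-player cost $2n+(\beta-1)k_j$, so that every equilibrium still totals $(\beta+1)n^2$, closes that gap.
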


\begin{proof}
When $0<\beta \leq1,$ the social optimum is when $|I|$ is maximized, i.e., every vertex of $G_2$ is connected to every vertex of $G_1.$ Thus the social optimum is $c^*_2(G_2)=2n^2+(\beta-1)n^2=(\beta+1)n^2.$ The Nash Equilibrium in this case is represented by a complete graph between the vertices of $G_1$ and $G_2.$ As buying one connection edge is cheaper than the edge distance $1$, one would like to buy the most to make the total cost least. 

The corresponding price of anarchy is 

$$PoA=\frac{(\beta n+n)n}{(\beta+1)n^2}=1.$$ 
\end{proof}

\begin{theorem}
	When $1<\beta\leq 2,$ the job network PoA in Type II game is $1.$
\end{theorem}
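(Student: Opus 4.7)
The plan is to show that for $1 < \beta \leq 2$, each job player's best response yields per-job cost exactly $2n + (\beta-1)\gamma(G_1)$, which also matches the per-job contribution in the social optimum; summing over the $n$ jobs then gives identical Nash and optimal social costs, so the PoA equals $1$. The core is a refinement of the preceding lower-bound argument that tracks which vertices of $G_1$ lie at distance $3$ or more from a job, rather than treating all non-neighbors uniformly.

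First I will fix a job $v_j$ with strategy $S_{v_j}$, set $s = |S_{v_j}|$, let $A$ be the set of vertices of $V(G_1) \setminus S_{v_j}$ that are adjacent in $G_1$ to some member of $S_{v_j}$ (each at distance exactly $2$ from $v_j$), and let $U$ be the remaining vertices (each at distance at least $3$). Since $s + |A| + |U| = n$, a direct rearrangement of $c_2(v_j) \geq \beta s + s + 2|A| + 3|U|$ gives
\[
c_2(v_j) \;\geq\; 2n + (\beta-1)s + |U|.
\]
The key structural observation is that $S_{v_j} \cup U$ is a dominating set of $G_1$ (every vertex of $G_1$ is in $S_{v_j}$, adjacent to $S_{v_j}$, or in $U$), hence $s + |U| \geq \gamma(G_1)$. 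If $s \geq \gamma(G_1)$ the displayed bound immediately yields $c_2(v_j) \geq 2n + (\beta-1)\gamma(G_1)$. If $s < \gamma(G_1)$, substituting $|U| \geq \gamma(G_1) - s$ gives $c_2(v_j) \geq 2n + (\beta-2)s + \gamma(G_1)$, and because $\beta - 2 \leq 0$ together with $s \leq \gamma(G_1) - 1$ forces this to be again at least $2n + (\beta-1)\gamma(G_1)$. The minimum is attained whenever $S_{v_j}$ is a minimum dominating set of $G_1$, so this single value is simultaneously the best-response cost of each job and a sharp per-job lower bound on the social cost.

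Summing over the $n$ jobs, every Nash Equilibrium has total cost exactly $n\bigl(2n+(\beta-1)\gamma(G_1)\bigr) = 2n^2+(\beta-1)\,n\,\gamma(G_1)$, and the same quantity is simultaneously attained as the social optimum by the assignment in which every job picks a minimum dominating set; hence PoA $= 1$. The main obstacle I foresee is the case $s < \gamma(G_1)$, where dropping below a dominating set might appear profitable since $\beta > 1$ makes edges expensive; the resolution is the inequality $|U| \geq \gamma(G_1) - s$, which shows that in the regime $\beta \leq 2$ the unavoidable extra distance-$3$ contributions exactly compensate for any edge-buying savings, preventing any deviation from beating the dominating-set cost.
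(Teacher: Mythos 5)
Your proof is correct and in fact more rigorous than the paper's own argument, though it lands on the same configuration (every job connecting to a minimum dominating set of $G_1$) as both the worst Nash Equilibrium and the social optimum. The paper argues informally in one direction only: it observes that buying an edge to a vertex already at distance $2$ saves $1$ but costs $\beta>1$, so no player over-buys beyond a dominating set; it never addresses why a player would not \emph{under}-buy, i.e.\ leave some vertices of $G_1$ at distance $3$ or more to save edge costs. Your decomposition into $S_{v_j}$, $A$, and $U$, together with the observation that $S_{v_j}\cup U$ dominates $G_1$ (so $s+|U|\geq\gamma(G_1)$), closes exactly that gap: in the case $s<\gamma(G_1)$ the bound $c_2(v_j)\geq 2n+(\beta-2)s+\gamma(G_1)\geq 2n+(\beta-1)\gamma(G_1)$ shows the distance-$3$ penalties cancel any savings precisely because $\beta\leq 2$, which is where the hypothesis is actually used. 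Two minor remarks: your distance-$\geq 3$ claim for $U$ tacitly assumes shortest paths from $v_j$ to $G_1$ run only through $v_j$'s own interconnection edges and then inside $G_1$ (not through $G_2$ or other jobs' edges); the paper makes the same tacit assumption throughout, so this is a shared modeling convention rather than a flaw in your argument. Also, your Nash and optimal totals carry the factor $n\gamma(G_1)$ where the paper writes only $\gamma(G_1)$ (the paper is inconsistent about whether $|I|$ counts edges or endpoints), but since the quantity appears identically in numerator and denominator, the conclusion $\mathrm{PoA}=1$ is unaffected either way.
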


\begin{proof}
When $1<\beta\leq 2$ the social optimum is obtained when $|I|$ is minimized, i.e., the vertices of $G_1$ that connected to $G_2$ vertices form a minimum dominating set of $G_1.$ 

The worst Nash Equilibrium is also obtained when $|I|$ is minimized, as the same with the social optimum. 

Because the cost of each edge is larger than the edge distance cost $1,$ buying one connection edge only decreases the distance of one pair vertices by $1$ if the original distance between them is $2,$ but add $\beta>1$ cost to the total.

The corresponding price of anarchy is $1$: 

$$PoA=\frac{2n^2+\gamma(G_1)(\beta-1)}{2n^2+\gamma(G_1)(\beta-1)}=1 .$$
\end{proof}

\begin{theorem}
	When $S< \beta\leq S+1,$ where $S\geq 3$, the job network PoA in Type II game is at most $\frac{S}{2}+1.$
\end{theorem}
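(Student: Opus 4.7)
The plan is to upper bound the worst-case Nash Equilibrium cost via a standard deviation argument and to lower bound the social optimum using the Type II inequality already established in this section, and then simply divide. The argument mirrors the style of the preceding PoA theorems for the Type II game.

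First I would bound each job player's NE cost from above. At any NE, $v_j$ has no profitable deviation, so its equilibrium cost is at most the cost it would incur under any fixed alternative strategy of its own. I choose the specific deviation ``$v_j$ connects to every vertex of $G_1$,'' which gives $|S_{v_j}|=n$ and $dist(v_j,\omega)=1$ for every $\omega\in V(G_1)$, so $c_2(v_j)=\beta n+n$. Summing over all $n$ job players yields $c_2(G_2)^{NE}\leq n(\beta n+n)=(\beta+1)n^2$. Next, I invoke the previously proved lower bound $c_2(G_2)\geq 2n^2+(\beta-1)|I|$ for the Type II game. Since $\beta>S\geq 3>1$ and $|I|\geq 0$, the coefficient $\beta-1$ is positive, so the social optimum satisfies $c^{*}_2(G_2)\geq 2n^2$. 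Dividing the two bounds gives
$$PoA\leq\frac{(\beta+1)n^2}{2n^2}=\frac{\beta+1}{2}\leq\frac{S+2}{2}=\frac{S}{2}+1,$$
where the last inequality uses $\beta\leq S+1$.

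The main point of care, and the only place where the argument could slip, is the universality of the deviation bound: the quantity $\beta n+n$ must upper bound the NE cost of every job in every Nash Equilibrium, not just one particular NE. This is clean in Type II because choosing the deviation ``connect to all of $G_1$'' forces each distance $dist(v_j,\omega)$ to equal $1$ regardless of the other jobs' strategies or of any routing through $G_2$, so the deviation cost does not depend on the rest of the equilibrium profile. Hence the bound $\beta n+n$ is uniform across all NEs, and the claimed $S/2+1$ upper bound on the PoA follows without needing any structural assumption on $G_1$ beyond what is already implicit in the model.
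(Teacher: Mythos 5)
Your proposal is correct and reaches the same final ratio $\frac{(\beta+1)n^2}{2n^2}=\frac{\beta+1}{2}\leq\frac{S}{2}+1$, but you justify the numerator by a genuinely different argument. The paper bounds the worst Nash Equilibrium cost by first asserting a structural property of equilibria --- that players buy edges until every $G_1$--$G_2$ distance is at most $S$ (or at most $2$) --- and then writes the equilibrium cost as $\beta|I|+|I|+S(n^2-|I|)=(\beta+1-S)|I|+Sn^2$, which it relaxes to $(\beta+1)n^2$ using $|I|\leq n^2$ and $\beta+1-S>0$. You instead use the defining property of a Nash Equilibrium directly: each job player's equilibrium cost is at most the cost of the unilateral deviation ``connect to all of $G_1$,'' namely $\beta n+n$, which is independent of the other players' strategies, so the total is at most $(\beta+1)n^2$ uniformly over all equilibria. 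Your route is more elementary and arguably more rigorous, since the paper's structural claim about equilibrium distances is only argued informally and your deviation bound sidesteps it entirely; what the paper's route buys is the sharper intermediate expression $\frac{(\beta+1-S)|I|+Sn^2}{2n^2+\gamma(G_1)(\beta-1)}$, which retains dependence on $|I|$ and on the domination number $\gamma(G_1)$ and could in principle yield a tighter instance-specific bound before being relaxed to $\frac{\beta+1}{2}$. Both arguments use the same lower bound $c_2^*(G_2)\geq 2n^2+(\beta-1)|I|\geq 2n^2$ on the social optimum.
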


\begin{proof}
If the original distance of one pair of vertices between $G_1$ and $G_2$ is greater or equal to $S\geq 3,$ buying one connection edge cost $ \beta\geq 2,$ the distance would decreased by $S-1\geq 2,$ then the players would like to buy edges to make sure each pair of vertices between  $G_1$ and $G_2$ is less than or equal to $2.$ Otherwise if $\beta> S-1,$ then in the Nash Equilibrium, players would buy smallest number of edges to make sure the distance of each pair of vertices between  $G_1$ and $G_2$ is less than or equal to $S.$

The corresponding price of anarchy is: 

\begin{equation}
\begin{aligned}
PoA&\leq \frac{\beta|I|+|I|+S(n^2-|I|)}{2n^2+\gamma(G_1)(\beta-1)}\\
&=\frac{(\beta+1-S)|I|+Sn^2}{2n^2+\gamma(G_1)(\beta-1)}\\
&\leq \frac{(\beta+1-S)n^2+Sn^2}{2n^2+\gamma(G_1)(\beta-1)}\\
&=\frac{(\beta+1)n^2}{2n^2+\gamma(G_1)(\beta-1)}\\
&\leq \frac{(\beta+1)}{2}\\
&\leq \frac{S}{2}+1.
\end{aligned}	
\end{equation}

\end{proof}

\section{Conclusion and Open Problems}

This is the first paper to address the topology structure of Job Edge-Fog interconnection network from the perspective of network creation game. A two level network creation game framework was developed. The analysis results show that, cleverly control the cost parameters of each player would lead the worst Nash Equilibrium very also or even exactly equal to the social optimum. This research opens a door to study the complicated job edge-fog interconnection network in an efficient distributed way. 

There are some open problems left that are worth mentioning: 

$\left(1\right)$  In this model, we consider the total length (sum of path lengths) in the cost function. This reflects an average case view.  If the total length is replaced by the maximum length, to reflect a worst case view, what can be shown about the performance?

$\left(2\right)$  In the social welfare functions, the interconnection cost has overlaps with the internal connection structure of edge-fog network. One may give a more accurate function to measure the social welfare, i.e., how the relationship of $\alpha$ and $\beta$ influences the two level network creation game performance?

$\left(3\right)$  In this framework, we only consider the network with distance based utility functions. As a refinement, one may also consider job process related issues in edge-fog devices. Then the processing time and job distribution would be important factors to analyze the performance of the job edge-fog interconnection network.

$\left(4\right)$  One may also develop other models rather than game theoretic framework to study the topology structure of job edge-fog interconnection networks. 

\paragraph{Acknowledgement}
The first author would like to thank Dr. Lin Chen for his inspiring discussions of computational difficulties of the QAP and Dr. Yi Li for his helpful discussions of the reverse version of Cauchy-Schwarz inequality.

\end{document}